\newtheorem{problem}{Problem}
\newtheorem{theorem}{Theorem}
\newtheorem{proposition}{Proposition}
\newtheorem{corollary}{Corollary}
\newtheorem{definition}{Definition}
\newtheorem{constraint}{Constraint}
\newtheorem{example}{Example}
\newtheorem{remark}{Remark}
\newcommand{\R}{\mathbb{R}}
\newcommand{\REV}[1]{#1}
\renewcommand{\AA}[1]{{\color{cyan}AA: #1}}
\title{\LARGE \bf
A Low Rank Approach to Minimize Sensor-to-Actuator Communication in Finite Horizon Output Feedback
}
\author{Antoine Aspeel$^{1}$, Jakob Nylof$^{2}$, Jing Shuang (Lisa) Li$^{1}$ and Necmiye Ozay$^{1}$
\thanks{*This work is funded by the ONR grant N00014-21-1-2431 (CLEVR-AI) and NSF Grant CNS-1837680.}
\thanks{$^{1}$A. A., J. S. L., and N. O. are with the Electrical Engineering and Computer Science Department, Univ. of Michigan, Ann Arbor, MI {\tt\small antoinas,jslisali,necmiye@umich.edu}.
}%
\thanks{$^{2}$J. N. is with KTH Royal Institute of Technology, Stockholm, Sweden, and he was a visiting scholar at the Univ. of
Michigan {\tt\small jnylof@kth.se}.
}%
}
\begin{document}

\maketitle
\thispagestyle{empty}
\pagestyle{empty}

\begin{abstract}
Many modern controllers are composed of different components that communicate in real-time over some network with limited resources. In this work, we are interested in designing a controller that can be implemented with a minimum number of sensor-to-actuator messages, while satisfying safety constraints over a finite horizon. For finite horizon problems, a linear time-varying controller with memory can be represented as a block-lower-triangular matrix. We show that the rank of this matrix exactly captures the minimum number of messages needed to be sent from the sensors to actuators to implement such a controller.
Moreover, we introduce a novel matrix factorization called \textit{causal factorization} that gives the required implementation. Finally, we show that the rank of the controller is the same as the rank of the Youla parameter, enabling the Youla parametrization (or analogous parametrizations) to be used to design the controller, which reduces the overall design problem into a rank minimization one over a convex set. Finally, convex relaxations for rank are used to demonstrate that our approach leads to 20-50\% less messages on a simulation than a benchmark method.

\end{abstract}

\section{Introduction} 

\REV{In a growing number of real-world applications (e.g., wireless sensor networks), controllers are implemented using distributed components (i.e., sensors and actuators) which must coordinate via limited resources. This work particularly concerns scenarios where sensors and actuators are not collocated, as in smart building heating systems, drone control in a motion capture arena, or on factory floors. In these cases, sensors and actuators may send messages to one another over a communication network with channel bandwidth constraints \cite{cervin2003does}. One resource-minimizing approach is to transmit fewer messages along the communication network to reduce the burden on the network. In this work, we consider the problem of designing a controller that can be implemented with a minimum number of sensor-to-actuator messages, while satisfying safety constraints. }

\subsubsection{Related \REV{works}}\label{sec:related_works}

The use of a communication network between distributed components to control a dynamic system induces numerous technical challenges (e.g., packet losses, communication delays and quantization due to limited bandwidth). The field of \emph{networked control systems} has emerged to tackle these challenges. In this field, a common objective is to minimize the use of sensors and actuators \cite{aspeel2021optimal}, \REV{\cite{balaghi20192, long2017stochastic}}, which gave rise to \emph{self-} and \emph{event-triggered controllers} \cite{heemels2012introduction}. Self-triggered controllers decide when to take the next action while computing the current action; event-triggered controllers trigger a control action only when the state (or state estimate) satisfies a certain condition. \REV{A related work \cite{BRAKSMAYER20172633} proposed a procedure to approximate a pre-designed linear stabilizing controller with a sensor- and an actuator-side controller that exchange messages only at given time instances to allow resource-aware implementations.}  However, the question of minimizing the number of messages is not as well characterized in the literature. In the context of static state-feedback quadratic control, a related recent work \cite{cho2023lowrank} proposed \REV{the use of a low-rank} gain matrix to reduce the energy used in broadcast communication between agents over a wireless network. However, our work differs substantially from \cite{cho2023lowrank} as we consider the time-varying, output feedback, and safety-constrained setting. In particular, the time-varying formulation allows us to optimize ``when" and ``what" to communicate, where we show that rank indeed is the correct metric to minimize sensor-to-actuator communication. 

In the related field of \emph{distributed control}, several physically interconnected (i.e., dynamically coupled) subsystems equipped with local controllers communicate with each \REV{other} via a communication network \cite{antonelli2013interconnected}. In this context, we might want to constrain which subsystems communicate with which others. This translates into sparsity constraints on the controller, which generally leads to NP-Hard problems \cite{wang2018convex}. While sparsity constraints are simple linear constraints on the controller gains, most problems are not convex in the controller gains directly, even in the centralized setting. This has led to the study of several \REV{parametrizations}, like \emph{Youla or Q-parametrization} \cite{skaf2010design} or \emph{\REV{system-level} synthesis} (SLS) \cite{anderson2019system}, where the control design problem is rendered convex after a suitable nonlinear change of variables. Then, the complexity of the distributed control design problem depends on whether the 
\REV{constraints} remain convex after the change of variables. Indeed, any convex constraint on the controller gain translates into a convex constraint over the Youla parameter if and only if it is \emph{quadratically invariant} (QI) \cite{rotkowitz2005characterization, lessard2011quadratic}. However, not all sparsity constraints are QI \cite[Section 3.5]{anderson2019system}, limiting the use of Youla parametrization for distributed control. To circumvent these limitations, SLS was developed \cite{anderson2019system, alonso2022distributed} --- a key 
feature of this approach is that it admits natural controller implementations in terms of the SLS parameters.

An alternative to having a fixed sparsity pattern constraining the communication architecture is to co-optimize the placement of sensors, actuators, and communication links by promoting sparsity \cite{lin2013design, matni2016regularization}. Convex relaxations of the sparsity maximization problem are proposed to minimize the number of sensors, actuators, and communication links. Though very closely related to the problem we consider, this framework does not explicitly consider the problem of minimizing messages --- in the aforementioned works, if a communication link is placed, it is assumed to be free of limitations (e.g., bandwidth). Also related are the works on sensor/actuator scheduling to maintain observability/controllability while promoting sparsity \cite{siami2020deterministic}.

\subsubsection{Contributions}

We consider a linear dynamical system subject to safety constraints over a finite horizon and address the problem of synthesizing an output feedback control law with memory while \emph{minimizing the number of messages sent from the sensors to the actuators}. In this \REV{finite-horizon} setting, linear memoryful controllers can be represented by a block-lower-triangular matrix $\mathbf{K}$.

\REV{Our three main contributions are as follows.} First, we prove that minimizing the number of messages can be done by minimizing the rank of $\mathbf{K}$. Second, we introduce the \emph{\REV{causal} factorization} of a matrix, and provide an algorithm to compute it. We prove that a minimum-message controller implementation can be obtained by computing the causal factorization of $\mathbf{K}$. Third, we prove that the rank of the controller is always equal to the rank of the Youla parameter, enabling the use of both Youla and SLS parametrization to design the controller. Finally, we use convex heuristics for rank minimization to demonstrate numerically that our method leads to fewer messages than \REV{sparsity-based} approaches.

\subsubsection{Notation} $\operatorname{rank} A$ and $\operatorname{Im} A$ denote the rank and image of the matrix $A$, respectively. $A_{i, :}$ is the $i$-th row of $A$ and $A_{l:k, :}$ the submatrix of $A$ formed by the rows $\{i\mid l\leq i\leq k\}$ (similarly for columns). $\operatorname{blkdiag}(A_1,\dots,A_n)$ denotes a block-diagonal matrix with diagonal blocks $A_1,\dots,A_n$. A matrix $A\in \R^{Tm\times Tn}$ is $(m,n)$-block-lower-triangular if $A_{tm+1:(t+1)m, \tau n+1:(\tau+1)n} = 0$ for all $0\leq t < \tau \leq T-1$. $\mathcal{X}\times\mathcal{Y}$ is the Cartesian product between sets $\mathcal{X}$ and $\mathcal{Y}$, \REV{and $\mathcal{X}^n$ is the $n$-th Cartesian power of the set $\mathcal{X}$}.

\section{Problem statement}
Consider a linear time-varying discrete time system
\begin{align}\label{eq:dynamics}
x_{t+1}=A_t x_t+B_t u_{t}+w_t,\ \ \ \ 
y_t=C_t x_{t}+v_t
\end{align}
where $x_t\in\R^{n_x}$, $u_t\in\R^{n_u}$, $w_t\in\R^{n_x}$, $y_t\in\R^{n_y}$, and $v_t\in\R^{n_y}$ are state, actuation, process noise, sensor measurement, and measurement noise, respectively. The finite horizon $t=0,\dots,T$ is considered.


We aim to minimize sensor-to-actuator messages while constraining the system trajectories for any disturbances in a bounded set. In particular, we consider polyhedra $\mathcal{X}_t$, $\mathcal{W}_t\subset\R^{n_x}$, $\mathcal{V}_t\subset\R^{n_y}$, $\mathcal{U}_t\subset\R^{n_u}$ and define the following constraint.
\begin{constraint}[Safety] \label{constraint:safety}
For all $w_t\in\mathcal{W}_t$, $v_t\in\mathcal{V}_t$ for $t=0,\dots,T-1$ and for all $x_0\in\mathcal{X}_0$, it holds that $u_t\in\mathcal{U}_t$ for $t=0,\dots,T-1$, and $x_t\in\mathcal{X}_t$ for $t=1,\dots,T$.
\end{constraint}

Before formalizing the problem of message minimization, we consider a motivating example.
\begin{example}[Minimizing messages]
\label{example:minimizing_messages}
Let \eqref{eq:dynamics} be a SISO system ($n_y = n_u = 1$), $T=4$ and let
\begin{align*}
\REV{\mathbf{y} = \begin{bmatrix}y_0 \\ y_1 \\ y_2 \\ y_3\end{bmatrix}, \
\mathbf{u} = \begin{bmatrix}u_0 \\ u_1 \\ u_2 \\ u_3\end{bmatrix}}, \
\mathbf{K}=\begin{bmatrix}
5 & & & \\
10 & 0 & & \\
0 & 3 & 4 & \\
15 & 6 & 8 & 0
\end{bmatrix}.
\end{align*}
To implement the controller $\mathbf{u=Ky}$, one can observe that $\mathbf{K}_{4,4}=0$, and conclude that only 3 sensor-to-actuator messages are needed ($y_3$ is not used). However, we show that this controller can be implemented with only 2 such messages. First, note that
$
\mathbf{K=DE}:=\REV{\begin{bmatrix}
1 & 0 \\
2 & 0 \\
0 & 1 \\
3 & 2
\end{bmatrix}
\begin{bmatrix}
5 & 0 & 0 & 0 \\
0 & 3 & 4 & 0 \\
\end{bmatrix}}.
$
This shows that the measurements can be encoded by the matrix $\mathbf{E}$ before the transmission and then be decoded using matrix $\mathbf{D}$ (see Fig. \ref{fig:block_diagram}). On the sensor side, the message $m_1=\mathbf{E}_{1,:}\mathbf{y}=5y_0$ can be sent at time $t_1=0$, and the message $m_2=\mathbf{E}_{2,:}\mathbf{y}=3y_1+4y_2$ can be sent at time $t_2=2$. On the actuator side, the inputs can be recovered as $u_0=1m_1$, $u_1=2m_1$, $u_2=1m_2$, and $u_3=3m_1+2m_2$. Crucially, the causality is respected: each message is transmitted (i) after the encoded measurements have been measured, but (ii) before being used by the actuator. This allows sending only two messages.
$\hfill\blacktriangle$
\end{example}
\vspace{-.2cm}
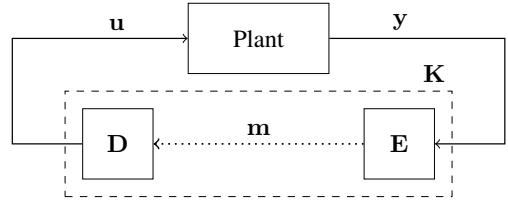
\begin{figure}[h]
\centering
\resizebox{.8\columnwidth}{!}{
\begin{tikzpicture}[scale=1]
\centering
\def\boxHight{1cm}
\def\plantWidth{2cm}
\def\xlim{3.5cm}
\def\encCenter{2cm}
\def\encWidth{1cm}
\def\ylim{1.5cm}
\def\KHight{1.5cm}
\def\KWidth{5.5cm}
\node [rectangle, draw, minimum width=\plantWidth, minimum height=\boxHight, align=center] (plant) at (0,0) {Plant}; 
\node [rectangle, draw, minimum width=\encWidth, minimum height=\boxHight, align=center] (encoder) at (\encCenter,-\ylim) {$\mathbf{E}$}; 
\node [rectangle, draw, minimum width=\encWidth, minimum height=\boxHight, align=center] (decoder) at (-\encCenter,-\ylim) {$\mathbf{D}$}; 
\node [dashed, rectangle, draw, minimum width=\KWidth, minimum height=\KHight] (K) at (0,-\ylim) {}; 

\draw [dotted, ->,line width=0.7] (encoder) -- node[above]{$\mathbf{m}$} (decoder); 
\draw [->,,line width=0.5] (plant) -- node[above left]{$\mathbf{y}$} (\xlim,0) -- (\xlim,-\ylim) -- (encoder); 
\draw [->,,line width=0.5] (decoder) -- (-\xlim,-\ylim) -- (-\xlim,0) -- node[above right]{$\mathbf{u}$} (plant);

\draw (\encCenter+0.5cm,-\ylim+\KHight/2+0.25cm) node{$\mathbf{K}$};
\end{tikzpicture}
}
\caption{\small Block representation of the encoder-decoder structure of the controller.}
\vspace{-0.3cm}
\label{fig:block_diagram}
\end{figure}

Inspired by the previous example, we define the controller with the following encoder-decoder structure: for $0\leq t_1 \leq t_2 \leq \dots \leq t_r \leq T$, let
\begin{align}\label{eq:controller}
m_k = \sum_{\tau\leq t_k} e^\top_{(k,\tau)}y_\tau\text{ and }
u_t = \sum_{k \text{ s.t. }t_k\leq t} d_{(t,k)}m_k
\end{align}
with $e_{(k, \tau)}\in \R^{n_y}$, $d_{(t, k)}\in \R^{n_u}$, and each $m_k\in\R$ is a message sent from the sensor to the actuator at time $t_k$ (each message is a real number). For the SISO system in Example~\ref{example:minimizing_messages}, $e_{(k, \tau)}^\top=\mathbf{E}_{k, \tau+1}\in\R$ and $d_{(t, k)}=\mathbf{D}_{t+1, k}\in\R$.

\begin{remark}
For simplicity, we consider the messages $m_k$ sent from sensors to actuators to be real numbers, which can in theory require infinite bandwidth. In practice, the messages exchanged must be quantized, e.g., implemented in \REV{fixed-point} arithmetic. This can be handled by including rounding errors \REV{in} the noise terms in \eqref{eq:dynamics}. A more indepth study of this issue to obtain \REV{information-theoretic} bounds as in \cite{tatikonda2004control}, \REV{\cite{nair2007feedback}, and references therein,} is left for future work.$\hfill\blacklozenge$ 
\end{remark}

The problem of minimizing the number $r$ of sensor-to-actuator messages can be written as follows.

\begin{problem} \label{prob:encoderDecoder}
Find the minimal $r$ such that there exist $\{t_k\}_{k=1}^r$, $\{e_{(k,\tau)}\}_{k=1,\dots,r}^{\tau=0,\dots,t_k}$ and $\{d_{(t,k)}\}_{k=1,\dots,r}^{t=0,\dots,t_k}$ satisfying \eqref{eq:dynamics}, \eqref{eq:controller} and Constraint \ref{constraint:safety}.
\end{problem}

\begin{remark}
Enforcing $d_{(t,l)}=d_{(t_k,l)}$ for all $l\leq k$ is equivalent to enforcing zero order hold constraints $u_t=u_{t_k}$ for all $t$ such that $t_k\leq t <t_{k+1}$. As such, optimizing over the $d_{(t,l)}$ can be interpreted as optimizing the (linear) holding mechanism, i.e., what control input should be used when there is no new measurement. 
$\hfill\blacklozenge$
\end{remark}

\section{Method} \label{sec:method}

\REV{The controller structure \eqref{eq:controller} can be rewritten compactly \REV{as} $\mathbf{m}=\mathbf{Ey}$, $\mathbf{u}=\mathbf{Dm}$ by defining} $\mathbf{u}\coloneqq\begin{bmatrix} u_0^\top & \dots & u_T^\top\end{bmatrix}^\top$, $\mathbf{y}\coloneqq\begin{bmatrix} y_0^\top & \dots & y_T^\top\end{bmatrix}^\top$, $\mathbf{m}\coloneqq\begin{bmatrix} m_1^\top & \dots & m_r^\top \end{bmatrix}^\top$ and
\begin{small}
\begin{align*}
    &\mathbf{D} \coloneqq\begin{bmatrix}
    d_{(0,1)} & \cdots & d_{(0,r)} \\
    \vdots & & \vdots \\
    d_{(T,1)} & \cdots & d_{(T,r)}
    \end{bmatrix},\ 
    \mathbf{E} \coloneqq\begin{bmatrix}
    e^\top_{(1,0)} & \cdots & e^\top_{(1,T)} \\
    \vdots & & \vdots \\
    e^\top_{(r,0)} & \cdots & e^\top_{(r,T)}
    \end{bmatrix},
\end{align*}
\end{small}
with
\begin{align}\label{eq:causal_ed}
e_{(k,\tau)}=0 \text{ when } \tau>t_k, \text{ and } d_{(t,k)}=0 \text{ when } \REV{t_k>t}.
\end{align}
Matrices $\mathbf{E}$ and $\mathbf{D}$ can be interpreted as encoders and decoders, respectively. Note that \REV{because of \eqref{eq:causal_ed}} they satisfy the following causality constraints.
\begin{constraint}[Causal encoder/decoder]\label{constraint:causal_encoding}
$\mathbf{E}_{k,\tau n_y+j}=\mathbf{D}_{tn_u+i,k}=0$ for all $k=1,\dots,r$, for all $\tau>t_k>t$, for all $i=1,\dots,n_u$, and for all $j=1,\dots,n_y$.
\end{constraint}

\REV{Constraint \ref{constraint:causal_encoding} (or equivalently, conditions \eqref{eq:causal_ed}) states that a message transmitted at time $t_k$ can neither (i) encode measurements received after $t_k$, nor (ii) be used before $t_k$.} Importantly, any matrices satisfying Constraint \ref{constraint:causal_encoding} define some $\{e_{(k,\tau)}\}_{k=1,\dots,r}^{\tau=0,\dots,t_k}$ and $\{d_{(t,k)}\}_{k=1,\dots,r}^{t=0,\dots,t_k}$ in \eqref{eq:controller}. Consequently, one can optimize over the set of matrices $\mathbf{D}$ and $\mathbf{E}$ satisfying Constraint \ref{constraint:causal_encoding} instead of optimizing over $e_{(k,\tau)}$ and $d_{(t,k)}$.

\subsection{Reformulation as a controller with memory}
The encoder-decoder structure \eqref{eq:controller} of the controller can be written as a linear time-varying output feedback controller with memory
\begin{align}\label{eq:controller:linear}
u_{t}=\sum_{\tau\leq t}K_{(t,\tau)}y_\tau,
\end{align}
with $K_{(t,\tau)}=\sum_{k\text{ s.t. }\tau\leq t_k\leq t} d_{(t,k)}e^\top_{(k,\tau)}$. We use the following notation
\begin{small}
\begin{align} \label{eq:K:block}
    &\mathbf{K}\coloneqq\begin{bmatrix}
    K_{(0,0)} & & & \\
    K_{(1,0)} & K_{(1,1)} & & \\
    \vdots & \ddots & \ddots & \\
    K_{(T,0)} & \hdots & K_{(T,T-1)} & K_{(T,T)}
    \end{bmatrix}.
\end{align}
\end{small}

The controller structure can be written compactly as $\mathbf{u=Ky}$. It follows that $\mathbf{K=DE}$ is $(n_u,n_y)$-block-lower-triangular.

We aim to solve Problem \ref{prob:encoderDecoder} by optimizing over $\mathbf{K}$ instead of $(\mathbf{D},\mathbf{E})$ to avoid the bilinear term $\mathbf{DE}$. To do so, we need to know in which cases and how matrices $(\mathbf{D,E})$ can be recovered from $\mathbf{K}$. This is captured by the notion of \emph{causal factorization}.

\begin{definition}[Causal factorization]
Let $\mathbf{K}\in\R^{(T+1)n_u\times (T+1)n_y}$ be a $(n_u,n_y)$-block-lower-triangular matrix. A pair of matrices $(\mathbf{D,E})\in\R^{(T+1)n_u\times r}\times \R^{r \times (T+1)n_y}$
is a \emph{causal factorization of $\mathbf{K}$ with band $r$} if $\mathbf{K} = \mathbf{DE}$ and there exist integers $0\leq t_1\leq t_2\leq \dots \leq t_r \leq T$ such that Constraint \ref{constraint:causal_encoding} (causality) holds.
\end{definition}

Note that because $\mathbf{K=DE}$, there is no causal factorization with band smaller than $\operatorname{rank}\mathbf{K}$. We now state our first main result.

\begin{theorem} \label{thm:causal_factorization}
Any $(n_u,n_y)$-block-lower-triangular matrix $\mathbf{K}\in\R^{(T+1)n_u\times (T+1)n_y}$ admits a causal factorization $(\mathbf{D,E})$ with band equal to $\operatorname{rank}\mathbf{K}$. In addition, Algorithm \ref{algo:causalFactorization} returns such a causal factorization.
\end{theorem}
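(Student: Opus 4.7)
The plan is to prove the two assertions separately. The lower bound band $\geq \operatorname{rank}\mathbf{K}$ is immediate from $\mathbf{K}=\mathbf{DE}$, so the work is in constructing a causal factorization with band exactly $\operatorname{rank}\mathbf{K}$ and arguing that the algorithm realizes one. I would proceed by a backward induction over block columns of $\mathbf{K}$, which fits naturally with the two sides of Constraint \ref{constraint:causal_encoding}: the $\tau$-th block of columns of $\mathbf{K}$ must be expressible using only those columns of $\mathbf{D}$ with $t_k \geq \tau$ (from the $\mathbf{E}$ side), and each such column of $\mathbf{D}$ must be supported on rows corresponding to times $\geq t_k$ (from the $\mathbf{D}$ side).

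For the construction of $\mathbf{D}$, let $\mathbf{K}^{[\tau]}$ denote the $\tau$-th block of columns of $\mathbf{K}$ and define $W_\tau := \operatorname{Im}(\mathbf{K}_{:,\,\tau n_y + 1:(T+1)n_y})$, with $W_{T+1}=\{0\}$. The key observation is that block-lower-triangularity of $\mathbf{K}$ implies $\operatorname{Im}(\mathbf{K}^{[\tau]})$ is supported only on rows with index $\geq \tau n_u$. Iterating from $\tau = T$ down to $\tau = 0$, I would maintain the invariant that the $\mathbf{D}$-columns chosen so far have $t_k \in \{\tau+1,\dots,T\}$ and together span $W_{\tau+1}$. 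At step $\tau$, using $W_\tau = W_{\tau+1} + \operatorname{Im}(\mathbf{K}^{[\tau]})$, I pick $\delta_\tau := \dim W_\tau - \dim W_{\tau+1}$ vectors inside $\operatorname{Im}(\mathbf{K}^{[\tau]})$ that are linearly independent modulo $W_{\tau+1}$, and add them as new $\mathbf{D}$-columns with $t_k = \tau$. Because each such vector lies in $\operatorname{Im}(\mathbf{K}^{[\tau]})$, its support is in rows $\geq \tau n_u$, so the causality condition on $\mathbf{D}$ is satisfied. Telescoping gives the total count $r = \sum_{\tau=0}^T \delta_\tau = \dim W_0 = \operatorname{rank}\mathbf{K}$.

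Next, I would construct $\mathbf{E}$ by simply reading off coordinates in the basis just built. Any column of $\mathbf{K}$ in block $\tau$ lies in $\operatorname{Im}(\mathbf{K}^{[\tau]}) \subseteq W_\tau = \operatorname{span}\{\mathbf{D}_{:,k} : t_k \geq \tau\}$, so it admits coefficients $\{\alpha_k\}_{k:\,t_k \geq \tau}$ expressing it as a linear combination of those $\mathbf{D}$-columns. Setting the corresponding entries of $\mathbf{E}$ to $\alpha_k$ and zeroing out the rest (all $k$ with $t_k < \tau$) yields a matrix whose row $k$ vanishes in every block $\tau > t_k$, giving exactly the required causality condition on $\mathbf{E}$, while the identity $\mathbf{K} = \mathbf{DE}$ holds block-column by block-column by construction. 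After relabeling so that $t_1 \leq \dots \leq t_r$, the pair $(\mathbf{D},\mathbf{E})$ is a causal factorization of band $\operatorname{rank}\mathbf{K}$.

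The main obstacle I anticipate is precisely the coupling between the two causality constraints: the construction works because the extension vectors chosen at time $\tau$ can simultaneously (i) live inside $\operatorname{Im}(\mathbf{K}^{[\tau]})$ (to satisfy the $\mathbf{D}$-side causality via the block-triangular row support) and (ii) enlarge the span from $W_{\tau+1}$ to $W_\tau$ (so that the $\mathbf{E}$-side causality is consistent). If instead one tries the forward direction, this dual requirement is hard to meet because the spans needed later are not known yet; this is why the backward sweep is natural. The correctness of Algorithm \ref{algo:causalFactorization} then reduces to showing that its iterative rank-revealing step is a concrete instance of selecting such an extension basis inside $\operatorname{Im}(\mathbf{K}^{[\tau]})$ modulo the current span, which follows from the same induction.
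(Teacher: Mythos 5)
Your existence argument is correct, and it takes a genuinely different route from the paper's. The paper proves the theorem by directly verifying Algorithm \ref{algo:causalFactorization}: it scans the \emph{rows} of $\mathbf{K}$ forward (top to bottom), takes as rows of $\mathbf{E}$ exactly those rows of $\mathbf{K}$ at which the running row rank $r_l=\operatorname{rank}\mathbf{K}_{1:l,:}$ increases, and then solves $\mathbf{K}_{l,:}=\mathbf{D}_{l,1:r_l}\mathbf{E}_{1:r_l,:}$ for $\mathbf{D}$. Causality of $\mathbf{E}$ falls out because its rows are literal rows of a block-lower-triangular matrix, and causality of $\mathbf{D}$ because each row of $\mathbf{K}$ lies in the span of the selected rows above it. Your construction is the column-space dual run backward in time: you assemble $\mathbf{D}$ from vectors of $\operatorname{Im}(\mathbf{K}^{[\tau]})$ chosen to extend $W_{\tau+1}$ to $W_\tau$, and read off $\mathbf{E}$ as coordinates in that basis. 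Both are valid; yours has a clean dimension-counting flavor (the telescoping $r=\sum_\tau\delta_\tau=\dim W_0=\operatorname{rank}\mathbf{K}$) and the extra freedom that the columns of $\mathbf{D}$ need not be actual columns of $\mathbf{K}$, while the paper's version is tailored to the algorithm exactly as written.

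The one genuine gap is your closing claim that the correctness of Algorithm \ref{algo:causalFactorization} ``reduces to'' the same induction. It does not: the algorithm is a forward, row-based greedy selection (line \ref{algo:line:def_ck} picks the first row index at which the row rank reaches $k$, and $\mathbf{E}_{k,:}$ is set to that row of $\mathbf{K}$), whereas your induction selects vectors in column spaces sweeping backward over column blocks. The algorithm is therefore an instance of the transposed, time-reversed counterpart of your construction, not of the construction you actually gave, and the ``current span'' it maintains is a row space growing from the top, not a column space growing from the right. Since the theorem explicitly asserts that the algorithm returns such a factorization, you still owe the short forward row-based verification (essentially the paper's proof), or at least a precise statement of the duality that transports your argument onto it.
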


Before proving Theorem \ref{thm:causal_factorization}, let us discuss the intuition of Algorithm \ref{algo:causalFactorization}. The matrix $\mathbf{E}$ is constructed as follows: a row of $\mathbf{K}$ is added to $\mathbf{E}$ only if it is linearly independent of the rows preceding it. In this way, when a line from $\mathbf{K}$ is not added to $\mathbf{E}$, it can be reconstructed (linearly) from the preceding lines in $\mathbf{E}$ (this is how $\mathbf{D}$ is constructed).

\begin{proof}
It is enough to prove the second claim of Theorem~\ref{thm:causal_factorization}. To this end, first, note that for all $l=1,\dots,(T+1)n_u$:
\vspace{-.25cm}\begin{align*}
\mathbf{K}_{l,:}\in \operatorname{Im}\mathbf{K}_{1:l,:} = \operatorname{Im}\begin{bmatrix} 
\mathbf{K}_{c_1,:} \\
\mathbf{K}_{c_2,:} \\
\vdots \\
\mathbf{K}_{c_{r_l},:}
\end{bmatrix} = \operatorname{Im}\mathbf{E}_{1:r_l,:},
\end{align*}
where the first equality follows from the definition of $c_k$ (line \ref{algo:line:def_ck}), and the second equality follows from the definition of $\mathbf{E}$ (line \ref{algo:line:def_E}). Consequently, the system of linear equations in line \ref{algo:line:system} always has a solution, and $\mathbf{K=DE}$.

For $k=1,\dots, r$, let $t_k$ be such that $c_k=t_k n_u+i_k$ for some $i_k\in\{1,\dots,m\}$. \REV{Then,} $0\leq t_1\leq\dots\leq t_r\leq T$.

Let $k\in\{1,\dots,r\}$, $\tau>t_k$ and $j\in\{1,\dots,n_y\}$. We need to prove that $\mathbf{E}_{k,\tau n_y+j}=0$. By definition of $\mathbf{E}$ (line \ref{algo:line:def_E}), $\mathbf{E}_{k,\tau n_y+j}=\mathbf{K}_{c_k,\tau n_y+j}$. But by definition of $t_k$, $c_k\leq t_k n_u+n_u < \tau n_u+1$. It follows from the block-lower-triangularity of $\mathbf{K}$ that $\mathbf{E}_{k,\tau n_u+j}=0$.

Let $k\in\{1,\dots,r\}$, $t<t_k$ and $i\in\{1,\dots,n_u\}$. We need to prove that $\mathbf{D}_{tn_u+i,k}=0$. Note that $tn_u+i\leq (t+1)n_u \leq t_kn_u < t_k n_u+i_k=c_k$. Then, by definition of $c_k$ (line \ref{algo:line:def_ck}), $r_{tn_u+i}<r_{c_k}=k$. It follows from the definition of $\mathbf{D}$ (see line \ref{algo:line:def_D} with $l=tn_u+i$) that $\mathbf{D}_{tn_u+i,k}=0$.
\end{proof}

\begin{algorithm}
\small
\caption{Causal factorization} \label{algo:causalFactorization}
\begin{algorithmic}[1]
\Require $\mathbf{K}\in\R^{(T+1)n_u\times (T+1)n_y}$, $(n_u,n_y)$-block-lower-triangular.
\For{$l=1,\dots,(T+1)n_u$}
    \State $r_l\coloneqq \operatorname{rank}\mathbf{K}_{1:l,:}$ \label{algo:line:def_rl}
\EndFor
\State $r\coloneqq r_{(T+1)n_u}$ \Comment{$r=\operatorname{rank}\mathbf{K}$}
\For{$k=1,\dots,r$} \Comment{Compute $\mathbf{E}$}
    \State $c_k\coloneqq \min\{l\mid r_l=k\}$ \label{algo:line:def_ck}
    \State $\mathbf{E}_{k,:}=\mathbf{K}_{c_k,:}$ \label{algo:line:def_E}
\EndFor
\For{$l=1,\dots,(T+1)n_u$} \Comment{Compute $\mathbf{D}$}
    \State Find $\mathbf{D}_{l,1:r_l}$ such that $\mathbf{K}_{l,:}=\mathbf{D}_{l,1:r_l}\mathbf{E}_{1:r_l,:}$ \label{algo:line:system}
    \State $\mathbf{D}_{l,r_l+1:r}\coloneqq0$ \label{algo:line:def_D}
\EndFor
\State \Return $(\mathbf{D,E})$
\end{algorithmic}
\end{algorithm}

Causal factorization is not unique. Indeed, for any invertible diagonal matrix $\Lambda$, the pairs $(\mathbf{D,E})$ and $(\mathbf{D}\Lambda^{-1},\Lambda\mathbf{E})$ have the same band and factorize the same matrix. The factorization in Example \ref{example:minimizing_messages} is the one computed using Algorithm \ref{algo:causalFactorization}.


It follows from Theorem \ref{thm:causal_factorization} that Problem \ref{prob:encoderDecoder} can be formulated as a rank minimization problem over $\mathbf{K}$.
\begin{corollary}
\label{thm:min_rank_K}
Optimal $r$, $\{t_k\}_{k=1}^r$, $\{e_{(k,\tau)}\}_{k=1,\dots,r}^{\tau=0,\dots,t_k}$ and $\{d_{(t,k)}\}_{k=1,\dots,r}^{t=0,\dots,t_k}$ for Problem \ref{prob:encoderDecoder} can be obtained by finding an optimal $\mathbf{K}^*$ for
\begin{align} \label{eq:min_rank_K}
\min_\mathbf{K}~\operatorname{rank}~\mathbf{K} \text{ s.t. \eqref{eq:dynamics}, \eqref{eq:controller:linear}, \eqref{eq:K:block}, Constraint \ref{constraint:safety}},
\end{align}
and computing a causal factorization of $\mathbf{K}^*$ with band equal to $\operatorname{rank}\mathbf{K}^*$.
\end{corollary}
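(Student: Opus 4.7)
The plan is to prove Corollary 1 by establishing a two-sided correspondence between feasible solutions of Problem 1 and feasible solutions of \eqref{eq:min_rank_K}, with matching objective values. Concretely, I would show that (i) every feasible point of Problem 1 with $r$ messages induces a feasible $\mathbf{K}$ for \eqref{eq:min_rank_K} with $\operatorname{rank}\mathbf{K}\le r$, and (ii) every feasible $\mathbf{K}$ for \eqref{eq:min_rank_K} with $\operatorname{rank}\mathbf{K}=r$ induces a feasible point of Problem 1 with $r$ messages via a causal factorization. Together these two directions show that the optimal values coincide and that a minimizer of \eqref{eq:min_rank_K}, together with its causal factorization, yields a minimizer of Problem 1.

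For direction (i), I would start from an arbitrary feasible tuple $\big(r,\{t_k\}_{k=1}^r,\{e_{(k,\tau)}\},\{d_{(t,k)}\}\big)$ of Problem 1, assemble the matrices $\mathbf{D}\in\mathbb{R}^{(T+1)n_u\times r}$ and $\mathbf{E}\in\mathbb{R}^{r\times(T+1)n_y}$ as in Section~\ref{sec:method}, and define $\mathbf{K}\coloneqq\mathbf{DE}$. By construction $\operatorname{rank}\mathbf{K}\le r$. A direct computation using the definition of $K_{(t,\tau)}=\sum_{k:\tau\le t_k\le t} d_{(t,k)} e_{(k,\tau)}^\top$ shows that $\mathbf{K}$ is $(n_u,n_y)$-block-lower-triangular, so \eqref{eq:K:block} holds, and that the closed-loop control law \eqref{eq:controller:linear} coincides with the encoder/decoder law \eqref{eq:controller}; hence \eqref{eq:dynamics} produces identical trajectories under either description and Constraint~\ref{constraint:safety} is preserved.

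For direction (ii), I would take any feasible $\mathbf{K}$ for \eqref{eq:min_rank_K} and invoke Theorem~\ref{thm:causal_factorization} to obtain a causal factorization $(\mathbf{D},\mathbf{E})$ with band $r=\operatorname{rank}\mathbf{K}$ and associated transmission times $0\le t_1\le\dots\le t_r\le T$. Reading the rows of $\mathbf{E}$ and columns of $\mathbf{D}$ off gives the $e_{(k,\tau)}$ and $d_{(t,k)}$, which satisfy the causality condition \eqref{eq:causal_ed} by Constraint~\ref{constraint:causal_encoding}. Because $\mathbf{K}=\mathbf{DE}$, the two descriptions \eqref{eq:controller} and \eqref{eq:controller:linear} define the same input sequence for any output sequence, so the closed-loop dynamics and Constraint~\ref{constraint:safety} are inherited from the feasibility of $\mathbf{K}$ in \eqref{eq:min_rank_K}.

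Combining the two directions, the optimal value of Problem~\ref{prob:encoderDecoder} equals the optimal value of \eqref{eq:min_rank_K}, and applying Algorithm~\ref{algo:causalFactorization} to an optimal $\mathbf{K}^*$ yields an optimal tuple for Problem~\ref{prob:encoderDecoder}. The main subtlety, and the only step that is not purely bookkeeping, is verifying that the encoder/decoder law \eqref{eq:controller} and the memoryful law \eqref{eq:controller:linear} induce identical closed-loop trajectories once the causality constraints hold; the rest reduces to invoking Theorem~\ref{thm:causal_factorization} and tracking the algebraic identity $\mathbf{K}=\mathbf{DE}$.
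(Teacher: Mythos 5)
Your proposal is correct and follows essentially the same route as the paper: both arguments hinge on the observation that the band of any causal factorization is at least $\operatorname{rank}\mathbf{K}$ (giving one direction) and on Theorem~\ref{thm:causal_factorization} to produce a factorization with band exactly $\operatorname{rank}\mathbf{K}$ (giving the other). The paper merely packages your two implications as a chain of equivalences, so the difference is presentational rather than substantive.
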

\begin{proof}
Let $r\in\mathbb{N}$ and $\{t_k\}_{k=1}^r$. As already noted in Section \ref{sec:method}, there exist vectors $d_{(t,k)}$ and $e_{(k,\tau)}$ satisfying \eqref{eq:controller} if and only if there exist matrices $(\mathbf{D},\mathbf{E}) \in\mathbb{R}^{n\times r} \times \mathbb{R}^{r\times m}$ satisfying $\mathbf{u=DEy}$ and Constraint \ref{constraint:causal_encoding}. Such matrices exist if and only if there exist a block-lower-triangular $\mathbf{K}$ such that $\mathbf{u=Ky}$ that admits a causal factorization with band $r$ (this follows from the definition of causal factorization). Because the band of a causal factorization can not be smaller than the rank, and by Theorem \ref{thm:causal_factorization}, $\mathbf{K}$ admits a causal factorization with band $r$ if and only if $\operatorname{rank}\mathbf{K}\leq r$. Consequently, minimizing $r$ is equivalent to minimizing $\operatorname{rank}\mathbf{K}$. Finally, note that $\mathbf{u=Ky}$ and the block-lower-triangularity of $\mathbf{K}$ are equivalent to \eqref{eq:controller:linear} and \eqref{eq:K:block}. This concludes the proof.
\end{proof}

\begin{remark}[Relation to sparsity] \label{remark:sparsity:RFD}
The \textit{regularization for design} framework \cite{matni2016regularization} proposes a method to optimize (and minimize) placement of sensors, actuators, and communication links using sparsity-based approaches. This method was originally developed for static infinite-horizon controllers, but can be adapted directly to time-varying controllers over a finite horizon. In a time-varying controller, regularization for design can be used to optimize not only sensor/actuator placements but also sensor/actuator usage. For example, minimizing the use of actuators can be written $\min \sum_l\| \mathbf{K}_{l,:}\|_0$. Indeed, $\mathbf{K}_{tm+i,:}=0$ implies $u_i(t)=0$ and actuator $i$ is not used at time $t$. Though closely related to the problem at hand, this sparsity-based approach does not explicitly consider message minimization.

\end{remark}

\subsection{System level synthesis (SLS)} \label{sec:SLS}
To handle the safety Constraint \ref{constraint:safety} (which involves \textit{for all} quantifiers), we follow \cite{chen2019system} and use SLS to rewrite it linearly in terms of SLS parameters (note that \cite{chen2019system} considers state feedback). Then, we prove that rank minimization of $\mathbf{K}$ has a natural reformulation in terms of SLS (or Youla) parameters, allowing us to rewrite \eqref{eq:min_rank_K} as a rank minimization problem subject to linear constraints. 

We make the following definitions: $\mathbf{x}\coloneqq\begin{bmatrix} x_0^\top & \dots & x_T^\top\end{bmatrix}^\top$, $\mathbf{w}\coloneqq\begin{bmatrix} x_0^\top & w_0^\top & \dots & w_{T-1}^\top\end{bmatrix}^\top$, $\mathbf{v}\coloneqq\begin{bmatrix} v_0^\top & \dots & v_T^\top\end{bmatrix}^\top$, the matrix $Z$ is the block-downshift operator (identity matrices on the first sub-diagonal and zeros elsewhere) and $\mathcal{A}\coloneqq \operatorname{blkdiag}(A_0,\dots,A_{T-1},0)$, 
 $\mathcal{B}\coloneqq \operatorname{blkdiag}(B_0,\dots,B_{T-1},0)$, $\mathcal{C}\coloneqq \operatorname{blkdiag}(C_0,\dots,C_{T})$.


We have that $\eqref{eq:dynamics}$ and $\eqref{eq:controller}$ can be written
$\mathbf{x} = Z\mathcal{A}\mathbf{x}+Z\mathcal{B}\mathbf{u} + \mathbf{w}$, $\mathbf{y} = \mathcal{C}\mathbf{x} + \mathbf{v}$, $\mathbf{u} = \mathbf{K}\mathbf{y}$
and equivalently
\begin{align}
\label{eq:system_response}
\begin{bmatrix} \mathbf{x}\\ \mathbf{u} \end{bmatrix} = \begin{bmatrix} \mathbf{\Phi}_{xx} & \mathbf{\Phi}_{xy} \\ \mathbf{\Phi}_{ux} & \mathbf{\Phi}_{uy} \end{bmatrix} \begin{bmatrix} \mathbf{w}\\ \mathbf{v} \end{bmatrix},
\end{align}
with $\mathbf{\Phi}_{xx}=(I-Z\mathcal{A}-Z\mathcal{B}\mathbf{K}\mathcal{C})^{-1}$, $\mathbf{\Phi}_{xy}=\mathbf{\Phi}_{xx}Z\mathcal{B}\mathbf{K}$, $\mathbf{\Phi}_{ux}=\mathbf{K}\mathcal{C}\mathbf{\Phi}_{xx}$ and $\mathbf{\Phi}_{uy}=\mathbf{K} + \mathbf{K}\mathcal{C}\mathbf{\Phi}_{xx}Z\mathcal{B}\mathbf{K}$.
The following proposition is the basis for finite horizon output feedback SLS and gives a condition under which all block-lower triangular controllers $\mathbf{K}$ can be parameterized by block-lower triangular system response $\{\mathbf{\Phi}_{xx},\mathbf{\Phi}_{xy},\mathbf{\Phi}_{ux},\mathbf{\Phi}_{uy}\}$ (and vice versa).
\begin{proposition}[Adapted from {\cite[Lemma 1]{hassaan2022system}}] \label{thm:slp}
Over the horizon $t=0,\dots,T$, the system dynamics \eqref{eq:dynamics} with controller \eqref{eq:controller}, the following are true:
\begin{enumerate}
\item the affine subspace defined by
\begin{subequations}\label{eq:SLP}
\begin{align}
\begin{bmatrix} I-Z\mathcal{A} & -Z\mathcal{B} \end{bmatrix} \begin{bmatrix} \mathbf{\Phi}_{xx} & \mathbf{\Phi}_{xy} \\ \mathbf{\Phi}_{ux} & \mathbf{\Phi}_{uy} \end{bmatrix} &= \begin{bmatrix} I & 0 \end{bmatrix} \\
\begin{bmatrix} \mathbf{\Phi}_{xx} & \mathbf{\Phi}_{xy} \\ \mathbf{\Phi}_{ux} & \mathbf{\Phi}_{uy} \end{bmatrix} \begin{bmatrix} I-Z\mathcal{A} \\ -\mathcal{C} \end{bmatrix} &= \begin{bmatrix} I \\ 0 \end{bmatrix}
\end{align}
\end{subequations}
parameterizes all possible system responses \eqref{eq:system_response}.
\item for any block-lower-triangular matrices $\{\mathbf{\Phi}_{xx},\mathbf{\Phi}_{xy},\mathbf{\Phi}_{ux},\mathbf{\Phi}_{uy}\}$ satisfying \eqref{eq:SLP}, the controller $\mathbf{K}=\mathbf{\Phi}_{uy}-\mathbf{\Phi}_{ux}\mathbf{\Phi}_{xx}^{-1}\mathbf{\Phi}_{xy}$ achieves the desired system response \eqref{eq:system_response}.
\end{enumerate}
\end{proposition}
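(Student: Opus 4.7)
The plan is to verify both claims by direct algebraic manipulation, exploiting the fact that the block-downshift $Z$ makes every product $ZM$ strictly block-lower-triangular whenever $M$ is block-lower-triangular. This ensures invertibility of $I - Z\mathcal{A}$, of $I - Z\mathcal{A} - Z\mathcal{B}\mathbf{K}\mathcal{C}$, and ultimately of $\mathbf{\Phi}_{xx}$ wherever these matrices are needed.

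For the first claim, substituting $\mathbf{u} = \mathbf{K}(\mathcal{C}\mathbf{x}+\mathbf{v})$ into $\mathbf{x} = Z\mathcal{A}\mathbf{x} + Z\mathcal{B}\mathbf{u} + \mathbf{w}$ yields $(I - Z\mathcal{A} - Z\mathcal{B}\mathbf{K}\mathcal{C})\mathbf{x} = \mathbf{w} + Z\mathcal{B}\mathbf{K}\mathbf{v}$, so the four $\mathbf{\Phi}$-blocks given right after \eqref{eq:system_response} are well-defined. Plugging these expressions into the two halves of \eqref{eq:SLP}, each block-identity reduces by direct cancellation: for instance, $(I-Z\mathcal{A})\mathbf{\Phi}_{xx} - Z\mathcal{B}\mathbf{\Phi}_{ux} = (I - Z\mathcal{A} - Z\mathcal{B}\mathbf{K}\mathcal{C})\mathbf{\Phi}_{xx} = I$; the remaining three identities follow analogously.

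For the second claim, I would first show that $\mathbf{\Phi}_{xx}$ is invertible: rearranging the top-left block of \eqref{eq:SLP} gives $\mathbf{\Phi}_{xx} = I + Z\mathcal{A}\mathbf{\Phi}_{xx} + Z\mathcal{B}\mathbf{\Phi}_{ux}$, which is identity plus a strictly block-lower-triangular term. The pivotal algebraic identity is then $\mathbf{K}\mathcal{C} = \mathbf{\Phi}_{ux}\mathbf{\Phi}_{xx}^{-1}$: right-multiplying the definition $\mathbf{K} := \mathbf{\Phi}_{uy} - \mathbf{\Phi}_{ux}\mathbf{\Phi}_{xx}^{-1}\mathbf{\Phi}_{xy}$ by $\mathcal{C}$ and substituting the two rows of the right-SLP equation, namely $\mathbf{\Phi}_{xx}(I-Z\mathcal{A}) - \mathbf{\Phi}_{xy}\mathcal{C} = I$ and $\mathbf{\Phi}_{ux}(I-Z\mathcal{A}) - \mathbf{\Phi}_{uy}\mathcal{C} = 0$, produces this identity after cancellation. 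The top-left block of \eqref{eq:SLP} then gives $I - Z\mathcal{A} - Z\mathcal{B}\mathbf{K}\mathcal{C} = \mathbf{\Phi}_{xx}^{-1}$, so the closed-loop $\mathbf{\Phi}_{xx}$ associated to $\mathbf{K}$ coincides with the given one. To match the remaining three blocks I would introduce a residual such as $\mathbf{M} := \mathbf{\Phi}_{xy} - \mathbf{\Phi}_{xx}Z\mathcal{B}\mathbf{K}$ and show $(I - Z\mathcal{A} - Z\mathcal{B}\mathbf{K}\mathcal{C})\mathbf{M} = 0$ using both halves of \eqref{eq:SLP} and the definition of $\mathbf{K}$; invertibility forces $\mathbf{M} = 0$, and analogous arguments dispatch $\mathbf{\Phi}_{ux}$ and $\mathbf{\Phi}_{uy}$.

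The main obstacle is the algebraic bookkeeping in the second claim, especially pinning down the identity $\mathbf{K}\mathcal{C} = \mathbf{\Phi}_{ux}\mathbf{\Phi}_{xx}^{-1}$, which is the pivot on which both the $\mathbf{\Phi}_{xx}^{-1}$ identification and the matching of the remaining blocks rest. The block-lower-triangularity hypothesis on the $\mathbf{\Phi}$-matrices is essential throughout, since without it neither $\mathbf{\Phi}_{xx}$ nor $I - Z\mathcal{A} - Z\mathcal{B}\mathbf{K}\mathcal{C}$ could be guaranteed invertible, and the parametrization would not be well-posed.
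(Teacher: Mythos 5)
The paper itself gives no proof of this proposition: it is imported by citation as ``Adapted from \cite[Lemma 1]{hassaan2022system}.'' So there is no in-paper argument to compare against, and your proposal should be judged on its own. It is correct, and it is the standard SLS derivation. Your verification of claim 1 by direct substitution of the four closed-loop expressions into both halves of \eqref{eq:SLP} goes through (each block reduces to $(I-Z\mathcal{A}-Z\mathcal{B}\mathbf{K}\mathcal{C})\mathbf{\Phi}_{xx}=I$ or its transpose-side analogue). For claim 2, your pivot identity $\mathbf{K}\mathcal{C}=\mathbf{\Phi}_{ux}\mathbf{\Phi}_{xx}^{-1}$ does follow from the two rows of the right-hand SLP equation exactly as you describe, and it immediately yields $\mathbf{\Phi}_{xx}=(I-Z\mathcal{A}-Z\mathcal{B}\mathbf{K}\mathcal{C})^{-1}$ and $\mathbf{\Phi}_{ux}=\mathbf{K}\mathcal{C}\mathbf{\Phi}_{xx}$; your residual argument for $\mathbf{\Phi}_{xy}$ also closes, since the left-hand SLP equation gives $(I-Z\mathcal{A})\mathbf{\Phi}_{xy}=Z\mathcal{B}\mathbf{\Phi}_{uy}$ and the definition of $\mathbf{K}$ gives $\mathbf{\Phi}_{uy}-\mathbf{K}\mathcal{C}\mathbf{\Phi}_{xy}=\mathbf{K}$, so $(I-Z\mathcal{A}-Z\mathcal{B}\mathbf{K}\mathcal{C})\mathbf{\Phi}_{xy}=Z\mathcal{B}\mathbf{K}$, and $\mathbf{\Phi}_{uy}=\mathbf{K}+\mathbf{K}\mathcal{C}\mathbf{\Phi}_{xx}Z\mathcal{B}\mathbf{K}$ then follows with no separate computation. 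The one small omission: for the reconstructed controller to be admissible in the sense of \eqref{eq:controller:linear} you should also note that $\mathbf{K}=\mathbf{\Phi}_{uy}-\mathbf{\Phi}_{ux}\mathbf{\Phi}_{xx}^{-1}\mathbf{\Phi}_{xy}$ is itself block-lower-triangular; this is immediate from your own observation that $\mathbf{\Phi}_{xx}=I+(\text{strictly block-lower-triangular})$, whose inverse is again block-lower-triangular, together with closure of block-lower-triangular matrices under products and sums.
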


The safety Constraint \ref{constraint:safety} can be handled linearly thanks to SLS. Indeed, it can be written
$\tilde{\mathbf{\Phi}}\mathcal{N} \subseteq \mathcal{S}$, where $\mathcal{N}\coloneqq\mathcal{X}_0\times\bigtimes_{t=0}^{T-1} \mathcal{W}_t\times\bigtimes_{t=0}^{T-1}\mathcal{V}_t$, $\mathcal{S}\coloneqq \bigtimes_{t=1}^T \mathcal{X}_t \times \bigtimes_{t=0}^{T-1}\mathcal{U}_t$, and
$$
\tilde{\mathbf{\Phi}}\coloneqq
\begin{bmatrix}
(\mathbf{\Phi}_{xx})_{n_x+1:(T+1)n_x,:} & (\mathbf{\Phi}_{xy})_{n_x+1:(T+1)n_x,:} \\
(\mathbf{\Phi}_{ux})_{1:Tn_u,:} & (\mathbf{\Phi}_{uy})_{1:Tn_u,:}
\end{bmatrix}.
$$
By Farkas' lemma, this is equivalent to the existence of a matrix $\Lambda$ such that
\begin{align} \label{eq:polytope_containment}
\Lambda \geq 0,\ \ \ 
\Lambda H_\mathcal{N} = H_\mathcal{S} \tilde{\mathbf{\Phi}},\ \ \ 
\Lambda h_\mathcal{N} \leq h_\mathcal{S},
\end{align}
where we used the H-representation $\mathcal{P}=\{p|H_\mathcal{P} p\leq h_\mathcal{P}\}$ for $\mathcal{P}\in\{\mathcal{N},\mathcal{S}\}$. Importantly, these constraints are linear in $\Lambda$ and the SLS parameters.\footnote{Polytope containment constraints can also be handled using Youla parametrization (see e.g., \cite[Lemmas 3 and 4]{aspeel2021optimal}).}

To use SLS to solve problem \eqref{eq:min_rank_K}, we also need to relate the rank of $\mathbf{K}$ to the SLS parameters. 
That is our second main contribution.
\begin{theorem} \label{thm:rank_youla}
For any block-lower-triangular matrices $\{\mathbf{\Phi}_{xx},\mathbf{\Phi}_{xy},\mathbf{\Phi}_{ux},\mathbf{\Phi}_{uy}\}$ satisfying \eqref{eq:SLP}, $\operatorname{rank}\mathbf{\Phi}_{uy}=\operatorname{rank}\mathbf{K}$, where $\mathbf{K}\coloneqq \mathbf{\Phi}_{uy}-\mathbf{\Phi}_{ux}\mathbf{\Phi}_{xx}^{-1}\mathbf{\Phi}_{xy}$.
\end{theorem}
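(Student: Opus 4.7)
The plan is to factor $\mathbf{\Phi}_{uy}$ as $\mathbf{K}$ times an invertible matrix, so that the two matrices automatically have the same rank.

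\textbf{Step 1 (Closed-form expression for $\mathbf{\Phi}_{uy}$).} By Proposition \ref{thm:slp}, part 2, the controller $\mathbf{K}=\mathbf{\Phi}_{uy}-\mathbf{\Phi}_{ux}\mathbf{\Phi}_{xx}^{-1}\mathbf{\Phi}_{xy}$ realizes the prescribed system response, so the closed-form expressions given just after \eqref{eq:system_response} are valid. In particular,
\begin{align*}
\mathbf{\Phi}_{uy} \;=\; \mathbf{K} + \mathbf{K}\mathcal{C}\mathbf{\Phi}_{xx}Z\mathcal{B}\mathbf{K} \;=\; \mathbf{K}\bigl(I+\mathcal{C}\mathbf{\Phi}_{xx}Z\mathcal{B}\mathbf{K}\bigr).
\end{align*}
This already yields $\operatorname{rank}\mathbf{\Phi}_{uy}\le\operatorname{rank}\mathbf{K}$; the remainder of the proof will be to show the right factor is invertible.

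\textbf{Step 2 (Invertibility of $I+\mathcal{C}\mathbf{\Phi}_{xx}Z\mathcal{B}\mathbf{K}$).} I would use the push-through (Sylvester's determinant) identity: $I+AB$ is invertible iff $I+BA$ is, with $A=\mathcal{C}\mathbf{\Phi}_{xx}$ and $B=Z\mathcal{B}\mathbf{K}$. So it suffices to show that $I+Z\mathcal{B}\mathbf{K}\mathcal{C}\mathbf{\Phi}_{xx}$ is invertible. Using $\mathbf{\Phi}_{xx}^{-1}=I-Z\mathcal{A}-Z\mathcal{B}\mathbf{K}\mathcal{C}$ and multiplying on the right by $\mathbf{\Phi}_{xx}$,
\begin{align*}
I \;=\; \mathbf{\Phi}_{xx} - Z\mathcal{A}\mathbf{\Phi}_{xx} - Z\mathcal{B}\mathbf{K}\mathcal{C}\mathbf{\Phi}_{xx},
\end{align*}
which rearranges to $I+Z\mathcal{B}\mathbf{K}\mathcal{C}\mathbf{\Phi}_{xx}=(I-Z\mathcal{A})\mathbf{\Phi}_{xx}$. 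This is a product of two invertible matrices: $\mathbf{\Phi}_{xx}$ is invertible by definition, and $I-Z\mathcal{A}$ is invertible because $Z\mathcal{A}$ is strictly block-lower-triangular (hence nilpotent).

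\textbf{Step 3 (Conclusion).} Since $\mathbf{\Phi}_{uy}$ equals $\mathbf{K}$ right-multiplied by an invertible matrix, $\operatorname{rank}\mathbf{\Phi}_{uy}=\operatorname{rank}\mathbf{K}$.

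The only potentially delicate point is Step 2: one must recognize the right algebraic identity that turns $I+Z\mathcal{B}\mathbf{K}\mathcal{C}\mathbf{\Phi}_{xx}$ into $(I-Z\mathcal{A})\mathbf{\Phi}_{xx}$, and invoke Sylvester's identity to transfer invertibility between $I+AB$ and $I+BA$ (which is necessary because the factor appearing in Step 1 has $A$ and $B$ in the opposite order from the one obtained from the $\mathbf{\Phi}_{xx}^{-1}$ identity). Everything else is routine.
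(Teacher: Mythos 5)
Your proof is correct, and it shares the paper's overall strategy---exhibit $\mathbf{\Phi}_{uy}$ as $\mathbf{K}$ multiplied by an invertible matrix---but it diverges in the choice of factorization and, consequently, in how invertibility is established. The paper groups the terms the other way: since $\mathbf{\Phi}_{ux}=\mathbf{K}\mathcal{C}\mathbf{\Phi}_{xx}$, it writes $\mathbf{\Phi}_{uy}=\mathbf{K}+\mathbf{\Phi}_{ux}Z\mathcal{B}\mathbf{K}=(I+\mathbf{\Phi}_{ux}Z\mathcal{B})\mathbf{K}$ and observes that $\mathbf{\Phi}_{ux}Z\mathcal{B}$ is \emph{strictly} block-lower-triangular (block-lower-triangular times strictly block-lower-triangular times block-diagonal), hence nilpotent, so $I+\mathbf{\Phi}_{ux}Z\mathcal{B}$ is invertible in one line. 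Your right-sided factor $I+\mathcal{C}\mathbf{\Phi}_{xx}Z\mathcal{B}\mathbf{K}$ requires the extra machinery of Sylvester's identity followed by the algebraic rearrangement $I+Z\mathcal{B}\mathbf{K}\mathcal{C}\mathbf{\Phi}_{xx}=(I-Z\mathcal{A})\mathbf{\Phi}_{xx}$; both steps are valid (the identity $\det(I+AB)=\det(I+BA)$ holds for rectangular $A,B$, and your rearrangement checks out), but they are avoidable. In fact the same nilpotency observation applies directly to your factor: $\mathcal{C}\mathbf{\Phi}_{xx}Z\mathcal{B}\mathbf{K}$ is a product of block-structured matrices containing the strictly block-lower-triangular $Z$, hence is itself strictly block-lower-triangular and nilpotent, which would have let you skip Step 2 entirely. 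What your detour buys, if anything, is that it leans only on the closed-form expression for $\mathbf{\Phi}_{xx}$ and the invertibility of $I-Z\mathcal{A}$ rather than on the triangular structure of the system responses; what it costs is two additional nontrivial steps where one structural remark suffices.
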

\begin{proof}
Assuming \eqref{eq:SLP}, it follows from the second statement in Proposition \ref{thm:slp} that equation \eqref{eq:system_response} holds. Then, one can write $\mathbf{\Phi}_{uy}=(I+\mathbf{\Phi}_{ux}Z\mathcal{B})\mathbf{K}$. Because $\mathbf{\Phi}_{ux}$, $Z$ and $\mathcal{B}$ are block-lower-triangular, strictly block-lower-triangular and block diagonal, respectively, $\mathbf{\Phi}_{ux}Z\mathcal{B}$ is strictly block-lower diagonal and $(I+\mathbf{\Phi}_{ux}Z\mathcal{B})$ is invertible. Consequently, $\operatorname{rank}\mathbf{K}=\operatorname{rank}\mathbf{\Phi}_{uy}$.
\end{proof}

\begin{remark}
The matrix $\mathbf{\Phi}_{uy}$ is the Youla parameter (see \cite[Equation (11)]{skaf2010design}), so this theorem states that the controller and the Youla parameter have the same rank. We note that the constraint $\mathbf{K}\in\{\mathbf{K}\mid \operatorname{rank}\mathbf{K}\leq r\}$ is QI\footnote{A set $\Omega$ is QI with respect to a plant if $\mathbf{K}P_{22}\mathbf{K}\in\Omega$ for all $\mathbf{K}\in\Omega$, where the matrix $P_{22}$ depends on the plant.} for any plant, while $\mathbf{K}\in\{\mathbf{K}\mid \operatorname{rank}\mathbf{K}= r\}$ is not.

Theorem \ref{thm:rank_youla} suggests that the minimal number of sensor-to-actuator messages does not depend on the parametrization --- in particular, \REV{message minimization} can be done using the standard, Youla, or SLS parametrization. SLS may be preferable as it allows us to consider extra constraints that are not QI; in Section \ref{sec:multi_sensors_actuators}, we show that SLS can handle cases where (i) several sensors do not share their measurements and (ii) several actuators do not share the messages they receive (whereas Youla parametrization does not allow this). $\hfill\blacklozenge$


\end{remark}

It follows from Proposition \ref{thm:slp}, the discussion on polytope containment, and Theorem \ref{thm:rank_youla} that \eqref{eq:min_rank_K} can be rewritten as a rank minimization problem subject to linear constraints.
\begin{corollary}\label{thm:min_rank_youla}
An optimal solution $\mathbf{K}^*$ to Problem \eqref{eq:min_rank_K} is given by an optimal solution $\{\mathbf{\Phi}_{xx}^*, \mathbf{\Phi}_{xy}^*, \mathbf{\Phi}_{ux}^*, \mathbf{\Phi}_{uy}^*\}$ of
\vspace{-.25cm}\begin{align} \label{eq:min_rank_phi}
\begin{aligned}
    \min_{\mathbf{\Phi}_{xx}, \mathbf{\Phi}_{xy}, \mathbf{\Phi}_{ux}, \mathbf{\Phi}_{uy}, \Lambda} \quad \operatorname{rank}~\mathbf{\Phi}_{uy}& \\
    \text{s.t. \ \eqref{eq:SLP}, \eqref{eq:polytope_containment} and }\mathbf{\Phi}_{xx},\ \mathbf{\Phi}_{xy}, \mathbf{\Phi}_{ux},\ \mathbf{\Phi}_{uy}& \\
    \text{ are $(n_u,n_y)$-block-lower-triangular},&
\end{aligned}
\end{align}
with $\mathbf{K}^*=\mathbf{\Phi}_{uy}^*-\mathbf{\Phi}_{ux}^*(\mathbf{\Phi}_{xx}^*)^{-1}\mathbf{\Phi}_{xy}^*$.
\end{corollary}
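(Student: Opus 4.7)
The plan is to prove the corollary by establishing a value-preserving bijection between feasible points of \eqref{eq:min_rank_K} and feasible points of \eqref{eq:min_rank_phi}. All three ingredients needed are already in place: Proposition \ref{thm:slp} supplies the controller/SLS-parameter correspondence, the Farkas-based rewriting earlier in Section \ref{sec:SLS} supplies the equivalence of Constraint \ref{constraint:safety} with \eqref{eq:polytope_containment}, and Theorem \ref{thm:rank_youla} supplies the rank equality. The proof is therefore a careful chaining of these facts in both directions.

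In the forward direction, starting from a feasible $\mathbf{K}$ for \eqref{eq:min_rank_K}, I would define the four SLS parameters by the closed-form expressions accompanying \eqref{eq:system_response}. Because $Z\mathcal{A}$ and $Z\mathcal{B}\mathbf{K}\mathcal{C}$ are strictly block-lower-triangular (owing to the block downshift $Z$), the matrix $I - Z\mathcal{A} - Z\mathcal{B}\mathbf{K}\mathcal{C}$ is block-lower-triangular with identity diagonal, so its inverse $\mathbf{\Phi}_{xx}$ inherits that structure; the remaining parameters $\mathbf{\Phi}_{xy}, \mathbf{\Phi}_{ux}, \mathbf{\Phi}_{uy}$ are then block-lower-triangular since $\mathcal{B}, \mathcal{C}$ are block-diagonal and $\mathbf{K}$ is block-lower-triangular. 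The affine constraints \eqref{eq:SLP} are satisfied by the first part of Proposition \ref{thm:slp}. Because these parameters realize the true closed-loop map \eqref{eq:system_response}, Constraint \ref{constraint:safety} is equivalent to $\tilde{\mathbf{\Phi}}\mathcal{N}\subseteq \mathcal{S}$, and Farkas' lemma produces a $\Lambda$ satisfying \eqref{eq:polytope_containment}. Theorem \ref{thm:rank_youla} then gives $\operatorname{rank}\mathbf{\Phi}_{uy}=\operatorname{rank}\mathbf{K}$, so the objective values agree.

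For the reverse direction, given a feasible quintuple $\{\mathbf{\Phi}_{xx}^*, \mathbf{\Phi}_{xy}^*, \mathbf{\Phi}_{ux}^*, \mathbf{\Phi}_{uy}^*, \Lambda\}$ for \eqref{eq:min_rank_phi}, I set $\mathbf{K}^* = \mathbf{\Phi}_{uy}^* - \mathbf{\Phi}_{ux}^*(\mathbf{\Phi}_{xx}^*)^{-1}\mathbf{\Phi}_{xy}^*$. The first block column of \eqref{eq:SLP} yields $\mathbf{\Phi}_{xx}^* = I + Z\mathcal{A}\mathbf{\Phi}_{xx}^* + Z\mathcal{B}\mathbf{\Phi}_{ux}^*$, forcing identity diagonal blocks on $\mathbf{\Phi}_{xx}^*$ and hence block-lower-triangularity and invertibility of $(\mathbf{\Phi}_{xx}^*)^{-1}$; then $\mathbf{K}^*$ itself is block-lower-triangular. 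The second part of Proposition \ref{thm:slp} guarantees that $\mathbf{K}^*$ realizes the system response \eqref{eq:system_response}, so Farkas' lemma again identifies \eqref{eq:polytope_containment} with Constraint \ref{constraint:safety}, and Theorem \ref{thm:rank_youla} gives matching ranks. Together, the two directions establish equality of optimal values and show that the stated formula recovers an optimal $\mathbf{K}^*$ for \eqref{eq:min_rank_K} from any optimum of \eqref{eq:min_rank_phi}.

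The main obstacle here is not conceptual but clerical: one must track the block-lower-triangular structure through every transformation, particularly through the matrix inverse $(\mathbf{\Phi}_{xx}^*)^{-1}$. This hinges entirely on the strict block-lower-triangularity of $Z$, so once that observation is invoked the argument proceeds by direct substitution from the earlier results.
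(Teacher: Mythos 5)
Your proposal is correct and follows essentially the same route as the paper, which justifies the corollary in a single sentence by chaining Proposition \ref{thm:slp}, the Farkas-lemma rewriting of Constraint \ref{constraint:safety} as \eqref{eq:polytope_containment}, and Theorem \ref{thm:rank_youla}. You simply spell out both directions of that correspondence (including the preservation of block-lower-triangularity and the invertibility of $\mathbf{\Phi}_{xx}$ via the identity diagonal blocks forced by \eqref{eq:SLP}), which is a faithful elaboration rather than a different argument.
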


\begin{remark}[Relation to sparsity continued] \label{remark:sparsity:QI}
Maximizing sparsity typically corresponds to minimizing the use of sensors/actuators. For example, minimizing the use of actuators reduces to maximizing the row sparsity of $\mathbf{K}$. Since row-sparsity is QI, $\min \sum_l\|\mathbf{K}_{l,:}\|_0 =\min \sum_l\| \mathbf{[\Phi}_{uy}]_{l,:}\|_0$.
$\hfill\blacklozenge$
\end{remark}

\subsection{Numerical considerations} \label{sec:numerical_considerations}
Problem \eqref{eq:min_rank_phi} is a rank minimization problem which is NP-hard. To solve it approximately, we use the \emph{reweighted nuclear norm heuristic} described in \cite[Section III]{mohan2010reweighted} with regularization parameter $\delta = 0.01$.

Once $\{\mathbf{\Phi}_{xx}^*, \mathbf{\Phi}_{xy}^*, \mathbf{\Phi}_{ux}^*, \mathbf{\Phi}_{uy}^*\}$ have been obtained, we compute the corresponding gain matrix $\mathbf{K}^*=\mathbf{\Phi}_{uy}^*-\mathbf{\Phi}_{ux}^*(\mathbf{\Phi}_{xx}^*)^{-1}\mathbf{\Phi}_{xy}^*$ (see Proposition \ref{thm:slp}). Then, its causal factorization $(\mathbf{D_{\epsilon},E_{\epsilon}})$ is computed using Algorithm \ref{algo:causalFactorization} (the ranks in line \ref{algo:line:def_rl} of the algorithm are computed with respect to a tolerance $\epsilon>0$). To handle the factorization error $\mathbf{K^*\approx K_\epsilon\coloneqq D_{\epsilon}E_{\epsilon}}$ due to $\epsilon\neq 0$, the feasibility of $\mathbf{K}_\epsilon$ is then checked by computing the corresponding $\{\mathbf{\Phi}_{xx}^\epsilon, \mathbf{\Phi}_{xy}^\epsilon, \mathbf{\Phi}_{ux}^\epsilon, \mathbf{\Phi}_{uy}^\epsilon\}$ (see equations following \eqref{eq:system_response}).

\begin{remark}[Relation to sparsity continued]\label{remark:sparsity:lasso}
Like minimizing rank, maximizing row sparsity is NP-hard. However, the 0-norm can be replaced by the 2-norm, leading to the (convex) actuator norm considered in \cite{matni2016regularization}. Alternatively, the row sparsity can be optimized using a reweighting heuristic \cite{candes2008enhancing}. From a computational point of view, minimizing the nuclear norm is a semidefinite program, while minimizing the sensor/actuator norm is a second order cone program (which can be done more efficiently). Another notable difference between sparsity and rank optimization is as follows: when maximizing sparsity, the controller can be re-optimized after the sparsity pattern has been fixed, since imposing a sparsity pattern is a linear constraint. On the contrary, the controller can not be re-optimized after the minimum rank has been found because rank constraints are not convex.$\hfill\blacklozenge$
\end{remark}

\subsection{Multiple sensors and actuators} \label{sec:multi_sensors_actuators}
Above, we assumed that there is only one sensor and one actuator (both of which make vector measurements/actuations). The case where several sensors do not share their measurements and several actuators do not share the messages they receive can be handled thanks to SLS (but can not be handled with Youla parametrization). For the sake of explanation, assume that each sensor $j$ makes a scalar measurement $y_j(t)\in\R$ and that each actuator $i$ actuates a scalar input $u_i(t)\in\R$. Using, for example, the controller implementation in \cite[Fig. 11(c)]{anderson2019system}, the following sparsity constraints can be used to prevent communication between sensors:
$[\mathcal{C}\mathbf{\Phi}_{xy}]_{tn+j_1,\tau n+j_2}=0$, for all $t,\tau$ and $j_1\neq j_2$. Then, the number of messages sent from sensor $j$ to actuator $i$ is given by $\operatorname{rank}\mathbf{\Phi}_{uy}^{(i,j)}$, where $[\mathbf{\Phi}_{uy}^{(i,j)}]_{t,\tau}\coloneqq [\mathbf{\Phi}_{uy}]_{tm+i,\tau n+j}$. Finally, minimizing the total number of messages is solving $\min \sum_i\sum_j \operatorname{rank}\mathbf{\Phi}_{uy}^{(i,j)}$. A causal factorization can be computed for each $\mathbf{\Phi}_{uy}^{(i,j)}$ independently.

\section{Numerical demonstrations}
To illustrate our method,\footnote{The code that generates the figures and implements our algorithm is available at: \url{https://github.com/aaspeel/lowRankControl}. \REV{The reported computation times are obtained} using a laptop with a Quad-Core Intel i7 CPU and 16 GB of RAM.} we consider a drone represented by the two-dimensional double integrator dynamics $\ddot{p}^x=u^x$, $\ddot{p}^y=u^y$, where $(p^x,p^y)$ represent the $(x,y)$-position of a drone subject to a force $(u^x,u^y)$. The state of the system is defined as $x=\begin{bmatrix} p^x & p^y & \dot{p}^x & \dot{p}^y \end{bmatrix}^\top$. The dynamics is exactly discretized with unit discretization step and $T=20$. A process noise $w_t\in[-0.05,0.05]^4$ is considered. The position $\begin{bmatrix} p^x & p^y \end{bmatrix}^\top$ is measured with some additive noise $v_t\in[-0.05,0.05]^2$. The initial state is $x_0\in[-8,-6]^2\times \{0\}^2$. The input constraints are $u_t\in[-2,2]$. The time-varying safety constraints over the state are $x_{10}\in[5,9]\times[-9,-5]\times[-2,2]^2$, $x_{20}\in[5,9]^2\times[-1,1]^2$, and $x_t\in[-10,10]^2\times[-2,2]^2$ at all other time steps.

\begin{figure}[!ht]
    \label{fig:simulations}
    \centering
        \begin{subfigure}[b]{\columnwidth}
        \includegraphics[width=\textwidth]{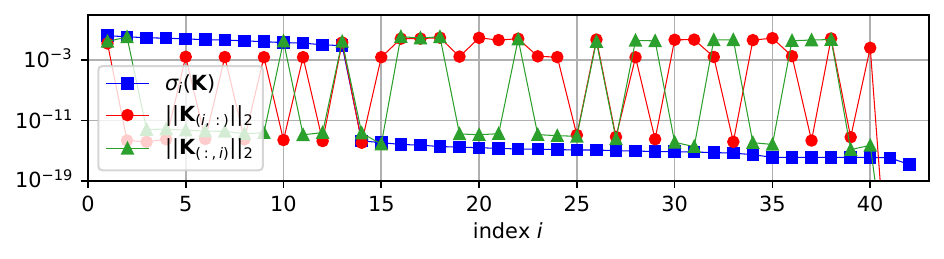}
        \caption{Ordered singular values of $\mathbf{K}$ for the nuclear norm case (squares). Column/row 2-norms of $\mathbf{K}$ for the sensor/actuator norm case (triangles/circles) after 8 iterations of the reweighting heuristics.}
        \label{fig:reweighting}
    \end{subfigure}
    \begin{subfigure}[b]{\columnwidth}
        \includegraphics[width=\textwidth]{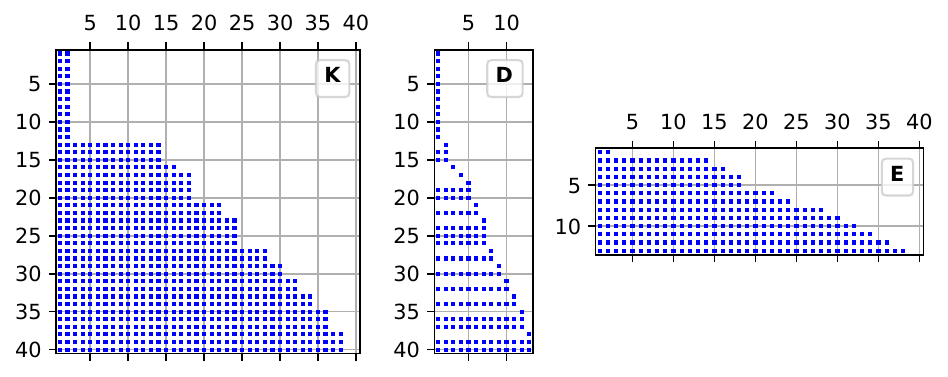}
        \caption{Sparsity of $\mathbf{K}$ and its causal factorization $\mathbf{(D,E)}$ for the nuclear norm case.}
        \label{fig:sparsity:nuclear}
    \end{subfigure}
    \begin{subfigure}[b]{\columnwidth}
        \centering
        \includegraphics[width=.91\textwidth]{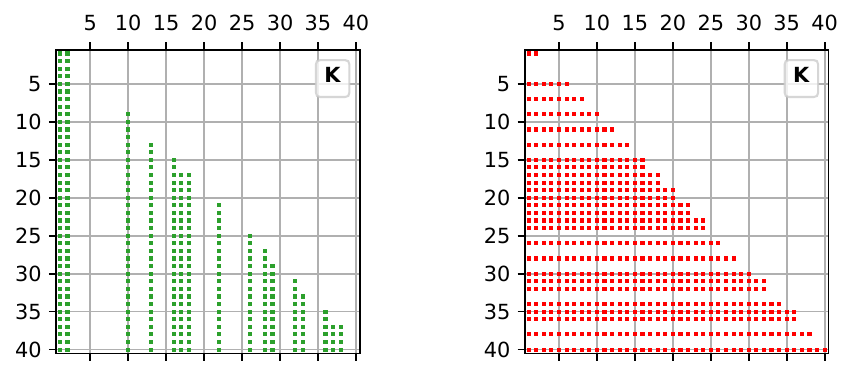}
        \caption{Sparsity of $\mathbf{K}$ for the sensor (left) and actuator norm case (right).}
        \label{fig:sparsity:sensor-actuator}
    \end{subfigure}
    \begin{subfigure}[b]{\columnwidth}
        \centering
       \includegraphics[width=.75\textwidth]{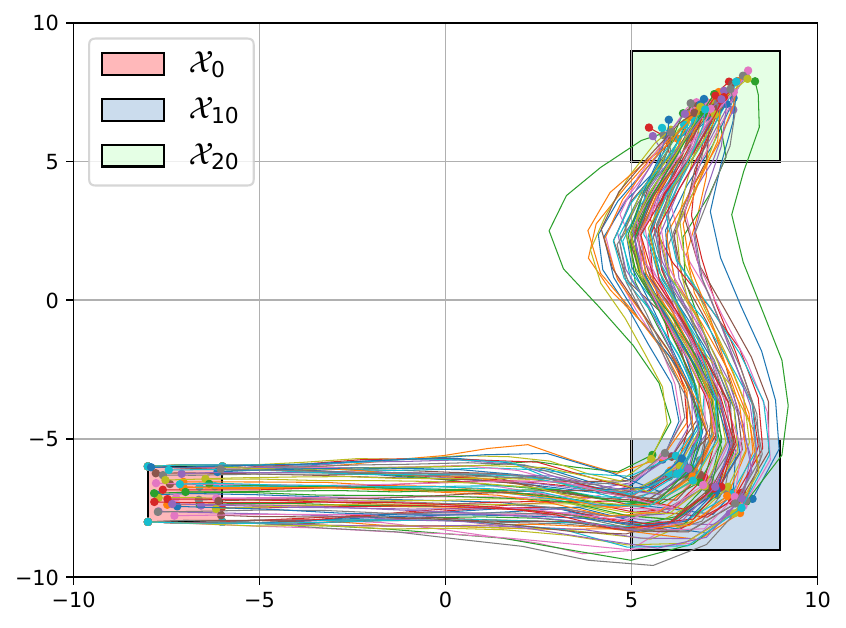}
        \caption{$(p^x_t,p^y_t)$ for the nuclear norm case (40 trajectories were generated from uniformly sampled noises and another 40 from noises over its vertices). The dots indicate the positions at times $t=0$, $10$ and $20$.}
        \label{fig:trajectories}
    \end{subfigure}
    \caption{\small Comparison of the solutions for the nuclear, sensor and actuator norm optimization problems.}
    \label{fig:enter-label}
    \vspace{-0.5cm}
\end{figure}

We compare our method to the sensor/actuator norms (see Remarks \ref{remark:sparsity:RFD}, \ref{remark:sparsity:QI}, \ref{remark:sparsity:lasso}) using the same reweighting heuristic \cite{candes2008enhancing}. Fig. \ref{fig:reweighting} shows the singular values (resp. the 2-norm of columns/rows) of $\mathbf{K}$. The large gaps indicate a small truncation error. Minimizing the rank took 205 seconds, while maximizing the column and row sparsity took 33 and 27 seconds, respectively. Fig. \ref{fig:sparsity:nuclear} presents the sparsity of the gain obtained from rank minimization and its causal factorization (computed in 0.01 seconds). Fig. \ref{fig:sparsity:sensor-actuator} shows the sparsity of the gain obtained by actuator and sensor norm minimization. Our method requires 13 messages (the band of the factorization), while sensor-/actuator-norms require 16 and 26 messages, respectively (the number of non-zero columns/rows). Finally, Fig. \ref{fig:trajectories} presents trajectories using the controller synthesized with our method. As expected, the trajectories satisfy the constraints.

\section{Conclusion}

\REV{We addressed the problem of synthesizing a controller that can be implemented with a minimum number of sensor-to-actuator messages. Our results show that for this objective, when considering linear time-varying controllers with memory, controller implementation benefits from having some computation on the sensor side (encoder) and some on the actuator side (decoder) as opposed to being fully collocated with the sensor or actuator. Our method relies on first minimizing the rank of the Youla parameter that we showed to be the same as the rank of the controller (with minimum rank giving the least number of message transmissions) and then computing the causal factorization of the controller. Future work will investigate infinite horizon counterparts of this method.}

\bibliographystyle{IEEEtran}
\bibliography{bibliography}

\end{document}